\documentclass{article}
\usepackage[accepted]{icml2026}
\usepackage{microtype}
\usepackage{graphicx}
\usepackage{subcaption}
\usepackage{booktabs} 

\usepackage{hyperref}
\usepackage{amsmath}
\usepackage{mathtools}
\usepackage[utf8]{inputenc}
\usepackage{amssymb}
\usepackage{xspace}
\usepackage{caption}
\usepackage{fix-cm}
\usepackage{xurl}
\usepackage{booktabs}
\usepackage{pifont}
\usepackage[table]{xcolor}
\usepackage[x11names]{xcolor}
\usepackage[dvipsnames]{xcolor}
\usepackage{threeparttable}
\usepackage{siunitx}
\usepackage{booktabs}
\usepackage{multirow}
\usepackage[most]{tcolorbox}
\usepackage{pgfplots}
\usepackage{pgfplotstable}
\pgfplotsset{compat=1.18}
\usepackage{subcaption}
\usepackage{float}
\tcbuselibrary{skins}
\usepackage{xcolor}
\usepackage{float}
\usepackage{longfbox}
\usepackage{newunicodechar}
\usepackage{multicol}
\newunicodechar{，}{,}
\usepackage{bbm}
\usepackage{listings}
\usepackage{seqsplit}
\usepackage{spverbatim}
\usepackage{setspace}

\usepackage{algorithm}
\usepackage{algpseudocode}
\usepackage{enumitem}
\hypersetup{
  colorlinks,
  linkcolor={blue!50!black},
  citecolor={blue!50!black},
  urlcolor={blue!50!black},
}

\usepackage{amsmath}
\usepackage{amssymb}
\usepackage{mathtools}
\usepackage{amsthm}
\usepackage{tabularx}

\usepackage[capitalize,noabbrev]{cleveref}

\theoremstyle{plain}
\newtheorem{theorem}{Theorem}[section]

\newtheorem{lemma}[theorem]{Lemma}

\theoremstyle{definition}

\theoremstyle{remark}

\newcommand{\mypara}[1]{\noindent{\bf {#1}.}\xspace}
\newcommand{\GeneralTerm}{{semantic caching}}
\newcommand{\TermA}{{semantic cache}}
\newcommand{\TermB}{{semantic KV cache}}
\newcommand{\Method}{\emph{CacheAttack}}

\icmltitlerunning{From Similarity to Vulnerability: Key Collision Attack on LLM Semantic
Caching}

\begin{document}

\twocolumn[
  \icmltitle{From Similarity to Vulnerability: Key Collision Attack on LLM Semantic Caching}



  \icmlsetsymbol{equal}{*}

  \begin{icmlauthorlist}
    \icmlauthor{Zhixiang Zhang}{1}
    \icmlauthor{Zesen Liu}{1}
    \icmlauthor{Yuchong Xie}{1}
    \icmlauthor{Quanfeng Huang}{2}
    \icmlauthor{Dongdong She}{1}
  \end{icmlauthorlist}

  \icmlaffiliation{1}{Department of Computer Science and Engineering, The Hong Kong University of Science and Technology}
  \icmlaffiliation{2}{Fudan University}

  \icmlcorrespondingauthor{Dongdong She}{dongdong@cse.ust.hk}

  \icmlkeywords{Machine Learning, ICML}

  \vskip 0.3in
]



\printAffiliationsAndNotice{}  

\begin{abstract}

Semantic caching has emerged as a pivotal technique for scaling LLM applications, widely adopted by providers including AWS and Microsoft. By utilizing embedding vectors as cache keys, this mechanism effectively minimizes latency and redundant computation for semantically similar queries. In this work, we conceptualize semantic cache keys as a form of fuzzy hashes. We demonstrate that the locality required to maximize cache hit rates fundamentally conflicts with the cryptographic avalanche effect necessary for collision resistance. 
Our conceptual analysis formalizes this inherent trade-off between performance (locality) and security (collision resilience), revealing that semantic caching is inherently vulnerable to key collision attacks.
While prior research has focused on side-channel and privacy risks, we present the first systematic study of integrity risks arising from cache collisions. We introduce \Method{}, an automated framework for launching black-box collision attacks. 
We evaluate \Method{} in security-critical tasks and agentic workflows. It achieves a hit rate of 86\% in LLM response hijacking and can induce malicious behaviors in LLM agent, while preserving strong transferability across different embedding models. 
A case study on a financial agent further illustrates the real-world impact. 
Finally, we discuss mitigation strategies, highlighting a persistent trade-off between cache efficiency and robustness.

\end{abstract}
\section{Introduction}
\begin{figure}[!ht]
    \centering
    \includegraphics[width=\linewidth]{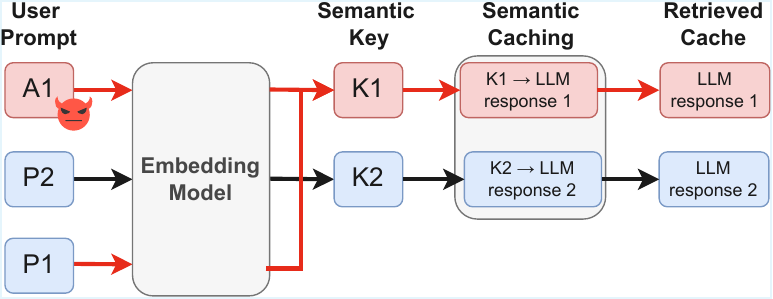}
    \caption{A brief overview of key collision in \GeneralTerm{}. 
    Semantically different $A_1$ (attacker) and $P_1$ (victim) unexpectedly maps to the same semantic key $K_1$. As a result, victim response is hijacked by the attacker.
    We show collision in \textbf{\textcolor[HTML]{EA1000}{red lines}}.
    }
    \label{fig:intro}
\end{figure}
Real-world LLM agents often need to process repeated LLM queries for different users. 
~\cite{zaharia2024compound}. These repeated queries incur significant redundant computation and inference latency.
To mitigate this redundancy, modern LLM agents increasingly rely on caching~\cite{gim2023prompt, zheng2024sglang}.
Beyond the transfer-internal KV cache~\cite{zhao2023survey}, the application-layer cache can store previously computed intermediate results and reuses them when similar queries are submitted by other users.

Semantic caching is a widely adopted mechanism in the application layer (e.g., AWS~\cite{AWSSemanticCache} and Microsoft~\cite{MicrosoftSemanticCache1, MicrosoftSemanticCache2}): it embeds each user query into a semantic vector and reuses it when satisfying the matching condition. 
This fuzzy-match approach improves hit rates across tenants compared with exact-match approach and reduces both computational cost and inference latency~\cite{bang2023gptcache, MicrosoftSemanticCache1}.
Semantic caching commonly appears in two forms: (i) \TermA{}\cite{bang2023gptcache, regmi2024gpt, yan2025contextcache, schroeder2025vcache, gill2025advancing, gill2025meancache, li2024scalm, couturier2025semantic}, which caches and serves \emph{final responses} via embedding similarity, and (ii) \TermB{}\cite{zhu2025sentencekv, zhao2025semsharekv}, which caches and reuses \emph{KV states} indexed by semantic keys, enabling approximate reuse at the level of model execution.

While caching improves efficiency, it also exposes a unique attack surface of LLM agents.
Previous security about LLM caching has been studied through privacy leakage, including  prompt leakage and reconstruction~\cite{wu2025know, luo2025shadow, gu2025auditing, song2025early, zheng2024inputsnatch} via timing-based side channel in multi-tenant setting.
However, the integrity compromise in semantic caching remains largely underexplored.

We observe that semantic caching introduces an intrinsic vulnerability by design. 
The semantic key matching can be modeled as a locality-preserving fuzzy hash: it produces the same hash for similar queries.
It fails to satisfy the \emph{avalanche effect}, a security property of hash to ensure collision resistance. 
As a result, \GeneralTerm{} is naturally vulnerable to cache collision. An attacker can easily leverage an adversarial example in the embedding model to trigger false-positive cache hits (i.e., malicious cache collisions). As illustrated in~\Cref{fig:intro}, an adversary can intentionally craft adversarial prompts $A_1$ that have different semantics from the benign prompt $P_1$, yielding the same semantic key $K_1$ as $P_1$. An attacker can therefore \emph{hijack} the LLM's response to the victim by retrieving an arbitrary cached response under his control.  

This security risk is further amplified in LLM agent pipelines. 
The cached intermediate outputs often influence subsequent planning and tool invocation, so a single hijacked LLM response can redirect the agent’s trajectory and induce cascading errors across downstream steps.

\mypara{\Method}Based on this vulnerability, we propose \Method{}, the first key collision attack to \GeneralTerm{} in a multi-tenant setting of LLM agents.
The core of \Method{} is a \textit{generator-validator} framework designed to search for adversarial suffixes that induce false-positive cache key collisions efficiently.
We introduce two distinct attack variants: 
(i) \Method{}-1, which performs direct validation on the black-box target model but is time-consuming and prone to detection due to temporal dependency;
(ii) \Method{}-2, which which uses a surrogate model to emulate the target model to mitigate these issues.

\mypara{Evaluation}
In evaluation, we first evaluate \Method{} in inducing LLM response hijacking. 
To achieve this, we construct an indirect prompt injection (IPI) dataset with 4,185 IPI prompts called SC-IPI, where \Method{} attains high Hit Rate and Injection Success Rate.
For an agentic scenario, the tool invocations are hijacked by reusing a malicious cached query, yielding high hit rate of 90.6\% together with a substantial drop in answer accuracy, thereby demonstrating cascading errors.
Besides, we also show that this attack transfers across different embedding models.
We further quantify a trade-off between efficiency and robustness, through systematic sensitivity and generalizability analysis.
To highlight practical severity, we further demonstrate end-to-end impact in a financial agent case study, where unintended malicious order execution is triggered, leading to potential economic loss.
Finally, we investigate defenses and further show the persistent performance–security trade-off.

\mypara{Contribution} We list our contributions as follows:
\begin{itemize}[topsep=2pt, itemsep=2pt, parsep=0pt, partopsep=0pt]
    \item We model semantic key matching in \GeneralTerm{} as a \emph{locality-preserving fuzzy hash function}. 

    \item We show that key-collision attacks are an inherent integrity risk of \GeneralTerm{}, rather than an anomaly. To study this risk systematically, we formalize a threat model and link this risk to the trade-off between locality and collision resilience.
    
    \item We propose \Method{}, the first cache key collision attack framework against \GeneralTerm{},
    where we overcome the temporal dependency challenge inherent in black-box setting.
    Our code is available in \url{https://github.com/Zzx1011/CacheAttack}.

    \item We empirically demonstrate \Method{} can induce both LLM response hijacking and agent tool invocation hijacking, thus leading to cascading errors. 

\end{itemize}

\section{Background and Related Works}

\subsection{Caching Mechanism in LLM}
Caching has been widely used in LLM to reduce high computational cost and latency~\cite{zhou2024survey}.
One common example is the key-value (KV) cache, which provides a performance boost at the decoding phase of LLM inference. It stores the KV states of attention layers for previously processed tokens, avoiding repeated computation of these cached KV states when generating the next token~\cite{zhao2023survey}. 

LLM prompt cache and semantic cache leverage a \emph{hash-table} design to reduce redundant computational cost when processing repeated or semantically similar user queries~\cite{gim2023prompt, zheng2024sglang, bang2023gptcache}.  
They typically maintain a fast local storage of key-value pairs $(k, v)$. They compute a hash of the user query $k$ and store the corresponding LLM computation result $v$ in the cache storage. When the hash of a future user query matches the stored hash $k$, the corresponding $v$ is retrieved to avoid repeated computation.
Prompt cache~\cite{gim2023prompt, zheng2024sglang} uses the user prompt prefix as a cache key and stores the corresponding KV-state as a cache value. In addition to the exact user prompt prefix, \TermB{} like SemShareKV~\cite{zhao2025semsharekv} and SentenceKV~\cite{zhu2025sentencekv} show that a semantic embedding vector of user prompt can also serve as a cache key, improving cache hit rates and performance. 

The \TermA{} stores the semantic embedding vector of the user prompt as cache key and the corresponding LLM-generated response as cache value~\cite{bang2023gptcache, regmi2024gpt, yan2025contextcache, schroeder2025vcache, gill2025advancing, gill2025meancache, li2024scalm, couturier2025semantic}. It allows an approximate reuse of LLM-generated responses for semantically similar user queries. Notably, \TermA{}s are often adopted and deployed by the real-world LLM service provider (such as AWS~\cite{AWSSemanticCache} and Microsoft~\cite {MicrosoftSemanticCache1, MicrosoftSemanticCache2}) in the cross-tenant setting, so as to cut the computation cost for a huge volume of LLM user queries in practice~\cite{bang2023gptcache, regmi2024gpt}.

\subsection{Hash Collision}
A hash function maps data $x$ of arbitrary size into fixed length key $k$, denoted as $k=h(x)$. A hash collision occurs when two distinct data sets (say $x$ and $y$, where $x\neq y$) accidentally share the same key ($h(x)=h(y)$)~\cite{preneel1993analysis}. Because the output space of a hash function is smaller than the input space, the hash collision is theoretically unavoidable. In a simple hash table, a hash collision can lead to worst-case runtime overhead during the hash table lookups~\cite{maurer1975hash}. In practice, it can lead to catastrophic security failures such as password cracking, where an attacker can access a victim's account without a password~\cite{oechslin2003making}, and critical certificate impersonation, where an attacker can impersonate a trusted victim to certify malware as benign.~\cite{stevens2007chosen}. 

\subsection{Avalanche Effect}
\label{sec:Hash Key Collision Properties}
An ideal hash function should minimize hash collisions so as to mitigate performance degradation and further security risks. 
To achieve this collision-resilient feature, a practical hash function should satisfy a key security property: the \emph{avalanche effect}. It defines that a small change $\epsilon$ in the input $x$ causes a significant and unpredictable change in the output, denoted as $\mathrm{Distance}(h(x),h(x+\epsilon))\to \mathrm{max}$~\cite{upadhyay2022investigating}. For example, the Strict Avalanche Criterion requires that flipping just one input bit should cause every output bit to flip with a 50\% probability~\cite{webster1985design}. 
The avalanche effect is a desired property for a hash function. Because it ensures that even minor variations in the input yield vastly different hash values, making it difficult for attackers to manipulate or predict the hash function's output. On the contrary, a hash function that shows poor avalanche effect is prone to hash collision~\cite{upadhyay2022investigating}.  

\subsection{Prior Work and Our Position}

Prior literature on the security risks of LLM caching systems primarily focuses on privacy leakage~\cite{song2025early, zheng2024inputsnatch, gu2025auditing}, like the recovery of private user prompts~\cite{wu2025know, luo2025shadow} via timing-based side channel attack. While our work focuses on LLM response hijacking through cache key collision. 
In parallel, poisoning-style attacks (e.g., PoisonedRAG~\cite{zou2025poisonedrag}) manipulate the external knowledge/content digested by LLM with corrupted retrieved documents or vector embedding.
Unlike their manipulation of external content/knowledge, our work exploits the key collision in the LLM's semantic cache. 

Modern \GeneralTerm{} (including \TermA{} and \TermB{}) leverages the semantic embedding of the user query as the cache key. The cache key matching is then determined by a similarity score between two embedding vectors. 
Although this fuzzy semantic-level matching can increase cache hit rates compared to exact token-level matching, it exposes a new attack surface at the cache key matching layer. An adversary can craft a malicious user query whose cache key collides with a benign user query's cache key, thereby triggering unintended reuse of a malicious response under the adversary's control.
Crucially, the attacker does not overwrite the cache value; they manipulate which cache value is reused by inducing key collisions in the LLM semantic cache.

We position our work as the \emph{first systematic study} of key collision attacks against \GeneralTerm{}. 
While some concurrent works~\cite{wu2026semantic, wu2026cache} observe empirical semantic cache poisoning or risks, we are the first to formalize this phenomenon as a key collision attack and unveil the inherent theoretical trade-off between locality and collision resistance.
We show that the key collision attack can lead to LLM response hijacking and further propagate to LLM-powered agents as cascading errors.

\section{Semantic Keys as Fuzzy Hashes: Locality vs.\ Collision Resistance}


\label{sec:fuzzy-hash}
We show that the semantic keys of LLM \GeneralTerm{} (including \TermA{} and \TermB{}) are essentially \emph{fuzzy hashes}~\cite{lee2017comparison}.
We then provide a conceptual analysis of the trade-off between locality (performance) and collision resistance (security) behind the semantic key's design. Our analysis reveals that the LLM semantic caching is inherently vulnerable to hash (i.e., cache key) collision attacks. 


Formally, a user prompt $p$ is embedded into a semantic key as $k_p=f(p)$, where the embedding model is defined as $f: p \to \mathbb{R}^d$ and $\mathbb{R}^d$ denotes the continuous vector space. 
In \TermA{}, the matching condition is determined by a similarity score between user semantic keys, as follows:
\begin{equation}
\label{eq:reuse-threshold}
    \mathrm{match}(p_1,p_2) = 
    \begin{cases} 
        \mathrm{true},  &\text{if } \mathrm{sim}(k_{p_1}, k_{p_2}) \ge \tau \\
        \mathrm{false}, & \text{if } \mathrm{sim}(k_{p_1}, k_{p_2}) < \tau 
    \end{cases}
\end{equation}
. $\mathrm{sim}(\cdot,\cdot)$ denotes the similarity score of two semantic keys and $\tau$ is the similarity threshold. $p_1$ matches $p_2$ when the similarity score is equal to or greater than $\tau$ and vice verse.
In \TermB{}, the matching condition is often computed by a discrete semantic indexing rule such as \emph{Locality-Sensitive Hashing (LSH)\cite{jafari2021survey}}, as follows:
\begin{equation}
\label{eq:reuse-lsh}
\mathrm{match}(p_1,p_2) = 
    \begin{cases} 
        \mathrm{true},  &\text{if } \ell\!\left(k_{p_1}\right) = \ell\!\left(k_{p_2}\right)
        \\
        \mathrm{false}, & \text{otherwise } 
    \end{cases}
\end{equation}
, where $\ell(\cdot)$ represents a classification function that partitions the continuous embedding space into several hyperplanes~\cite{charikar2002similarity}. 

Both cases can be considered fuzzy hashing, where multiple semantically similar user queries are mapped to the same bucket, i.e., the same semantic cache key. This strong locality feature sharply differs from the security property of conventional hash functions: the avalanche effect, as we explain in Sec~\ref{sec:Hash Key Collision Properties}. Specifically, given two semantically similar user prompts $p_1$ and $p_2$, under the fuzzy semantic key setting, they share the same hash value, as $\mathrm{Distance}(h(p_1), h(p_2)) \to 0$. While the avalanche effect requires the difference in hash values for two similar user queries to be sufficiently large, as $\mathrm{Distance}(h(p_1), h(p_2))\to \mathrm{max}$, to ensure strong collision resistance.

This conflict reveals a hidden dilemma behind the semantic key's objective: a trade-off between locality (performance) and collision resistance (security). On the one hand, semantic keys are designed to preserve \emph{locality} so as to maximize cache hits for semantically similar user queries. On the other hand, this design choice necessarily creates a \emph{coarse-grained match criterion} (either through a permissive threshold $\tau$ or coarse partitioning), which is exactly the opposite of an avalanche-effect separation that ensures strong collision resistance.

To formalize this trade-off, we provide a theoretical bound on the false-positive risk inherent in this design. 
Let $\mathcal{B}$ denote the set of queries for which returning a cached response $p_c$ is semantically incorrect for a given query $p$. 
The probability of a false-positive hit is lower bounded as follows:

\begin{lemma}[Lower Bound on False-Positive Hits]
Given a cache hit probability $h = \Pr[\mathrm{match}(p, p_c) = \mathrm{true}]$, the probability of a false-positive hit is lower bounded by:
\begin{equation}
\label{eq:lemma-bound}
\Pr[p \in \mathcal{B} \mid \mathrm{match}(p, p_c) = \mathrm{true}] \ge 1 - \frac{\Pr[p \notin \mathcal{B}]}{h}
\end{equation}
\end{lemma}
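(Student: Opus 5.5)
The bound follows from elementary probability, with the query $p$ treated as random and the cached entry $p_c$ held fixed. Write $M$ for the event $\{\mathrm{match}(p,p_c)=\mathrm{true}\}$, so that $h=\Pr[M]$, and assume $h>0$ so the conditional probability is defined. The plan is to bound the complementary event first: conditioned on a hit, the query is either in $\mathcal{B}$ (a false positive) or not in $\mathcal{B}$ (a correct reuse). Hence
\begin{equation*}
\Pr[p\in\mathcal{B}\mid M] \;=\; 1-\Pr[p\notin\mathcal{B}\mid M].
\end{equation*}
It therefore suffices to show $\Pr[p\notin\mathcal{B}\mid M]\le \Pr[p\notin\mathcal{B}]/h$.

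For that step, expand the definition of conditional probability:
\begin{equation*}
\Pr[p\notin\mathcal{B}\mid M]=\frac{\Pr[\{p\notin\mathcal{B}\}\cap M]}{\Pr[M]}.
\end{equation*}
The numerator is at most $\Pr[p\notin\mathcal{B}]$ by monotonicity of probability, since the intersection is contained in $\{p\notin\mathcal{B}\}$. The denominator equals $h$ by definition. Substituting this into the first identity gives exactly the inequality in \eqref{eq:lemma-bound}.

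The only real obstacle is interpretive rather than technical. One must fix the probability space (random query $p$, fixed cached $p_c$ and fixed matching rule from \eqref{eq:reuse-threshold} or \eqref{eq:reuse-lsh}) so that $\Pr[p\notin\mathcal{B}]$ and $h$ live on the same space. One should also note the two edge cases. If $h=0$, the statement is vacuous. If $h\le\Pr[p\notin\mathcal{B}]$, the right-hand side is nonpositive and the bound is trivially true.

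After the proof, I would add a remark on why the bound is informative. Raising the hit rate $h$, for instance by loosening $\tau$ or coarsening the LSH partition, pushes the right-hand side toward $1-\Pr[p\notin\mathcal{B}]=\Pr[p\in\mathcal{B}]$. So any locality gain beyond the mass of semantically valid queries is forced to consist of false positives. This is the performance--security trade-off stated in the lemma.
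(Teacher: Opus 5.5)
Your proof is correct and matches the paper's argument: both split the hit event by whether $p\in\mathcal{B}$ and then use monotonicity to get $\Pr[\{p\notin\mathcal{B}\}\cap M]\le\Pr[p\notin\mathcal{B}]$. The only difference is that the paper takes the complement on the joint probability, $\Pr[\{p\in\mathcal{B}\}\cap A]=h-\Pr[\{p\notin\mathcal{B}\}\cap A]$, rather than on the conditional probability; your explicit assumption $h>0$ and your fixing of the probability space are useful clarifications the paper leaves implicit.
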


Detailed proof is provided in Appendix~\ref{app:proof-lemma1}.


This trade-off further introduces an intrinsic vulnerability in the LLM semantic caching: hash collisions(i.e., semantic key collision). An attacker can easily craft an adversarial example in the semantic embedding space~\cite{alzantot2018generating, ebrahimi2018hotflip} for a false-positive hash collision. Unlike true-positive hash collisions, where queries share similar semantics, false-positive hash collisions can be arbitrary matches between any queries.
Hence, the attacker can hijack a user query with arbitrary malicious instructions via such an unexpected and malicious cache collision.

\section{Threat Model}
\label{sec:thr}
In this section, we model the key collision attack as an adversarial example in the embedding model of the semantic cache. Further, the attacker leverages the cache key collision to hijack LLM response and induce malicious behaviors in the LLM agent. 

\subsection{Attack Target}
\label{sec:thr:attack target}
Semantic caching is deployed as a critical middleware within LLM agent frameworks in multi-tenant settings. When a user query arrives, the system uses an embedding model to convert the prompt into a vector, which serves as the semantic cache key. It then determines whether any existing cache entry can be reused based on a predefined decision boundary. If a cache hit occurs, the agent directly retrieves and executes the stored execution \textit{plans} and \textit{tool-invocations}, skipping the costly  LLM inference. In this entire system, the embedding model of \GeneralTerm{} is our attack target.

%



\subsection{Attacker’s Knowledge and Capabilities}
\label{sec:thr:attack capability}
We assume the attacker can send arbitrary user prompts to the LLM system and receive the corresponding response. 
We specifically assume the embedding model that computes the cache key is a black-box function $f(p)$ to the attacker, which means its parameters, specific vector representations, and similarity threshold score are all inaccessible to the attacker.
Under these black-box constraints, the attacker can optimize adversarial prompts against a local surrogate embedding model that mimics the key generation of the target model and infer cache hit/miss by analyzing LLM's observable signals. These signals include response consistency, recurring generation patterns, and variations in the end-to-end LLM response latency. Attacker can easily use the publicly released embedding models (e.g., \texttt{BAAI/bge-small-en-v1.5}\cite{hf_bge_small_en_v15}) as surrogate models to launch the attack.


We emphasize that the attacker can select \emph{arbitrary} prompt as their collision targets.
Unlike prior poisoning works that assume the attacker can directly modify cached values \cite{zou2025poisonedrag}, our attack neither modifies existing cached values nor accesses internal KV representations. Additionally, we don't control cache eviction policies or alter the embedding model's or the underlying LLM's parameters.
All attacks are conducted at inference time and exploit the locality property of \GeneralTerm{} rather than the backend LLM’s internal decision boundaries.

\subsection{Attacker’s Goals}
\label{sec: attack_goal}

The attacker’s goal is to induce \emph{response hijacking} by triggering false-positive cache hits via semantic key collisions.
Given a benign user prompt $p_{\text{user}}$, the attacker seeks to construct an adversarial prompt $p_{\text{adv}}$ that satisfies the cache matching condition for $p_{\text{user}}$, denoted as $\mathrm{match}(p_{\text{user}},p_{\text{adv}})=\mathrm{true}$ detailed in Eq.~\eqref{eq:reuse-threshold} and Eq.~\eqref{eq:reuse-lsh}. While $p_{\text{adv}}$ includes malicious instructions controlled by the attacker and is semantically dissimilar to $p_{\text{user}}$.
This results in a semantic misalignment between the user's intent and LLM response, a phenomenon we refer to as \emph{response hijacking}.

Our LLM response hijacking arises from false-positive cache hit, rather than common prompt injection~\cite{liu2024formalizing} or backend LLM hallucination~\cite{ji2023survey}. 
The retrieved response can semantically deviate from the victim’s original intention to malicious content under the attacker's control and further manifest as
\emph{misinformation}, \emph{misleading advice}, or even explicitly \emph{malicious content}, depending on the content of the cached LLM response, as we will show in Section~\ref{sec:evaluation}.
The attacker’s ultimate goal is to leverage such hijacked responses to induce harmful behavior in real-world LLM agents.

\section{Methodology}
\label{sec:method}
In this section, we present an automated framework called \Method{}. 
We first formalize our adversarial example generation as a generator–validator framework, and then propose two concrete instantiations: \Method{}-1 and \Method{}-2.

\subsection{Generator-Validator Framework}
\label{sec:method:problem_formalization}

\Method{} is to generate an adversarial prompt $p_a$ that collides with an arbitrary victim prompt $p_v$ under the matching condition of LLM semantic caching.

\mypara{Generator}
We design an adversarial prompt suffix generator that can lead to a collision with the victim prompt. Formally, we construct attacker queries in the form $p_a=p_{\mathrm{src}}\oplus s$, where $p_{\mathrm{src}}$ is the source prompt and $s\in\mathcal{V}^L$ is a discrete suffix. We optimize $s$ via a GCG-based~\cite{zou2023universal} search by trading off collision strength and adversarial payload fluency:
\begin{equation}
\small
\label{eq:obj_targeted_generic}
\min_{s \in \mathcal{V}^L}\ 
\mathcal{L}_{\mathrm{col}}\!\left(p_{\mathrm{src}}\oplus s,\ p_v\right)
\ +\ 
\lambda\ \mathrm{PPL}_{\mathcal{M}}\!\left(p_{\mathrm{src}}\oplus s\right),
\end{equation}
where $\mathcal{M}$ is a language model used only for Perplexity (PPL) scoring and $\lambda>0$ controls the trade-off. Here, we include perplexity to ensure a stealthy, robust attack that can bypass the LLM agent's input filters, with a detailed discussion in~\Cref{sec:defense:ppl}.

For \TermA{}, the collision loss $\mathcal{L}_{\mathrm{col}}^{\mathrm{cos}}$ is defined by a similarity score of semantic embedding between adversarial prompt and victim prompt as follows:
\begin{equation}
\small
\label{eq:obj_semcache_targeted_g}
\mathcal{L}_{\mathrm{col}}^{\mathrm{cos}}
= 1-\mathrm{sim}\!\bigl(g(p_{\mathrm{src}}\oplus s),\ g(p_v)\bigr).
\end{equation}

For \TermB{}, the cache matching condition relies on a geometric partitioning of the feature space. A query $p$ is mapped to a region defined by its orientation relative to a set of random hyperplanes, formulated as $h(p)=\mathrm{sign}(R_{\mathrm{LSH}}\,e(p))$, where $R_{\mathrm{LSH}}$ is a fixed random projection matrix and ${e}(p)$ denotes the embedding vector. To address the vanishing gradients of the sign operator, we employ the standard continuous relaxation $\tilde{h}(p) = \tanh(\alpha R_{\mathrm{LSH}}\,e(p))$ with a temperature parameter $\alpha > 0$. Given that $\tanh(\alpha x)$ approaches $\text{sign}(x)$ as $\alpha \to \infty$ (for $x \neq 0$), minimizing the objective $\|\tilde{h}(p_a) - \tilde{h}(p_v)\|_2^2$ effectively enforces similarity between the underlying discrete cache key for prompt $p_a$ and $p_v$.
We then define the collision loss $\mathcal{L}_{\mathrm{col}}^{\mathrm{cos}}$ for semantic KV cache as:
\begin{equation}
\small
\label{eq:lsh_collision_loss_g}
\mathcal{L}_{\mathrm{col}}^{\textsc{lsh}}
=\left\|
\tilde{h}(p_{\mathrm{src}}\oplus s)
-
\tilde{h}(p_v)
\right\|_2^2,
\end{equation}
and plug it into Eq.~\eqref{eq:obj_targeted_generic}.

\mypara{Validator}
Since the target model's internal cache state is strictly opaque, we treat the verification of a cache hit as a latent state inference problem. 
We construct a statistical validator that leverages execution latency as a side-channel signal to discern between cache hits and misses.   

We treat latency as a noisy observation of a latent hit indicator $H \in \{0,1\}$. Given a query $p$, let $T(p)$ denote its measured latency and $Y(p) = \log T(p)$. We model $Y(p)$ under class-conditional Gaussians:
\(
\small
Y \mid H=h \sim \mathcal{N}(\mu_h, \sigma_h^2), \quad h \in \{0, 1\}.
\)
The parameters $(\mu_h, \sigma_h)$ are estimated using a set of calibration queries with known hit/miss outcomes. Specifically, we query repeated identical requests (for guaranteed hits) and nonce-augmented queries (for guaranteed misses).

To further mitigate the impact of severe network jitter, we incorporate a dynamic recalibration mechanism during the probing process instead of relying on a fixed threshold for classification. 
We perform hit/miss inference using the MAP (maximum a posteriori) rule under 0–1 loss:
\(
\small
\widehat{H}(p) = \arg\max_h \ \mathcal{N}(Y(p); \mu_h, \sigma_h^2),
\)
which is equivalent to the following decision rule:
\(
\small
\widehat{H}(p) = \mathbf{1} \left[\log \frac{\mathcal{N}(Y(p); \mu_1, \sigma_1^2)}{\mathcal{N}(Y(p); \mu_0, \sigma_0^2)} \ge 0 \right].
\)
Thus, we classify query $p$ as a hit if the log-likelihood ratio exceeds zero, ensuring that hit and miss decisions are based on a calibrated statistical model rather than a rule-based threshold.

\subsection{Details of \Method{}}
\label{sec:method:detail}
Based on the Generator-Validator framework, we propose two attacks, \Method{}-1 and \Method{}-2. 
Both attacks share an identical generator and differ in their validation strategies.
\Method{}-1 employs the \textit{target model} directly as the validator, whereas \Method{}-2 utilizes a \textit{surrogate-assisted filtering}.
During the validation stage, the validator first plants the generated $p_a$ in the cache, followed by a query with the original $p_v$ to determine the occurrence of a cache hit or miss. We define this sequence as a single \emph{iteration}. 

\mypara{\Method{}-1}
We first optimize a suffix with the generator and then validate on the target model. 
By intuition, we can directly use the semantic cache with target model as the validator.
However, frequent queries can be easily detected by techniques like traffic volume analysis~\cite{barford2002signal}.
Besides, since the target model operates as a black box, explicit cache flushing is infeasible. Consequently, a mandatory waiting period, i.e., \textit{Time To Live (TTL)}, must be observed between successive iterations to prevent inter-iteration interference and ensure experimental integrity. This inherent temporal dependency significantly inflates the execution time of \Method{}-1, rendering it impractical for large-scale attacks. 

\mypara{\Method{}-2}
To overcome the efficiency constraints of \Method{}-1, we design \Method{}-2 as a \textit{surrogate-assisted filtering} paradigm. In this approach, the generator's candidates are primarily evaluated by the surrogate model, which acts as a high-throughput proxy to identify promising suffixes without incurring high temporal cost. 
Crucially, the black-box target model is only invoked for a \textit{final verification} once a candidate $p_a$ successfully induces the desired cache behavior on the surrogate. 
This strategy minimizes the interaction with the target system to a single query per iteration. 
If this final verification fails, \Method{}-2 seamlessly resumes the optimization loop on the next candidate. This design decouples extensive exploration from the restrictive TTL bottleneck, ensuring both high attack efficiency and target-side fidelity.

\section{Evaluation}
\label{sec:evaluation}
In this section, we evaluate the effectiveness of \Method{} in security-critical tasks and its real-world impact on LLM-based agents by answering the following questions.
    
\noindent \textbf{RQ1:} Is \Method{} effective in hijacking LLM response? \\[4pt]
\textbf{RQ2:} Is \Method{} effective in hijacking agent tool invocation? \\[4pt]
\textbf{RQ3:} Does \Method{} transfer across different embedding models?\\[4pt]
\textbf{RQ4:} Can \Method{} demonstrate practical effectiveness in real-world cases?


\subsection{Experimental setup}

For \TermA{}, we adopt \texttt{GPTCache} \cite{bang2023gptcache} with a cosine similarity threshold $\tau$ as the response-level semantic cache.
As for RQ1, we also conduct experiments on the practical industry services in AWS Bedrock~\cite{AWSSemanticCache} and Azure API Management~\cite{MicrosoftSemanticCache2}.
For \TermB{}, we re-implement \texttt{SemShareKV} \cite{zhao2025semsharekv}, which employs Locality-Sensitive Hashing (LSH) for internal state sharing in the KV cache.
We both use Qwen3-8B~\cite{yang2025qwen3} as the backend LLM.
Since our generator adopts a search algorithm, we set the maximum search steps to 1000 to bound the attacker's budget.

\mypara{Compared Baseline}
We introduce a Genetic Algorithm (GA) baseline~\cite{lambora2019genetic} as a black-box generator while keeping the validator unchanged. Specifically, the fitness function is defined as the cosine similarity between the adversarial query embedding and the target embedding.

\subsection{RQ1: The effectiveness of \Method{} in hijacking LLM response}
\label{sec:eval:rq1}
In this part, we evaluate whether \Method{} can lead benign user queries to match a planted malicious cache entry, thereby inducing LLM response hijacking.
For each benign prompt, we run the generator of \Method{} to generate an optimized suffix for a selected IPI prompt, aiming to make the resulting injected query collide with the benign query under the cache's decision boundary. 
We then query the backend LLM with the suffixed IPI prompt.
Finally, we send the benign query and record whether it matches the injected entry and what response is returned.
To achieve this, we randomly sample 50 benign prompts from Natural Questions (NQ) \cite{kwiatkowski2019natural} as benign queries. And as for the malicious prompts, we curate a dataset consisting of 4185 diverse \textit{Indirect Prompt Injection} (IPI) instructions across many security-critical scenarios using GPT-5.2~\cite{openai2025gpt5systemcard} detailed in \Cref{appendix: Indirect Prompt Injection Generation}, named SC-IPI.
More experimental settings are in \Cref{app:exp_setting}.

In this experiment, we report two metrics: \ding{192} Hit Rate (HR), the fraction of benign queries that reuse an injected entry. \ding{193} Injection Success Rate (ISR), the fraction of benign queries whose returned response exhibits injection behavior.
ISR is expected to be lower than HR because few IPI prompts may be filtered by the LLM internal guardrails.

From the results in \Cref{tab:rq1_injection}, \Method{}-1 achieves a high average HR over 0.86 and ISR over 0.82, while \Method{}-2 yields slightly lower but still competitive performance, which shows both of our methods are truly effective in hijacking LLM response.
Besides, \Cref{tab:rq1-industry} demonstrates \Method{} also achieves great performance in practical industry services(AWS Bedrock~\cite{AWSSemanticCache} and Azure API Management~\cite{MicrosoftSemanticCache2}).
However, as we mentioned in \Cref{sec:method}, \Method{}-1 results in long runtime and can be easily detected. We consider \Method{}-2 as the better one and we use this in the following RQs.

\begin{table}[!ht]
\centering
\caption{Hit Rate (HR) and Injection Success Rate (ISR) of \Method{} in hijacking LLM response.}
\label{tab:rq1_injection}
\small
\setlength{\tabcolsep}{5pt}
\resizebox{0.95\linewidth}{!}{
\begin{tabular}{lcccc}
\toprule
\multirow{2}{*}{\textbf{Method}} & \multicolumn{2}{c}{\textbf{\TermA{}}} & \multicolumn{2}{c}{\textbf{\TermB{}}} \\
\cmidrule(lr){2-3}\cmidrule(lr){4-5}
& \textbf{HR(\%)} & \textbf{ISR(\%)} & \textbf{HR(\%)} & \textbf{ISR(\%)} \\
\midrule
Clean (Benign)    & 0.0 & 0.0  & 0.0 & 0.0 \\
GA (Baseline)     & 36.2 & 33.7 & 34.0 & 31.5 \\
\Method{}-1       & 86.9 & 81.1 & 85.8 & 83.0 \\
\Method{}-2       & 83.1 & 77.1 & 80.4 & 77.0 \\
\bottomrule
\end{tabular}
}
\end{table}

\begin{table}[h]
\centering
\caption{Hit Rate (HR, \%) of \Method{} in hijacking LLM response on industrial semantic caching services.}
\label{tab:rq1-industry}
\begin{tabular}{lcc}
\hline
\textbf{Method} & \textbf{AWS Bedrock} & \textbf{Azure} \\ \hline
Clean (Benign) & 0.0 & 0.0 \\
GA (Baseline) & 12.4 & 15.2 \\
CacheAttack-1 & 80.6 & 88.3 \\
CacheAttack-2 & 78.2 & 86.7 \\ \hline 
\end{tabular}
\end{table}

\subsection{RQ2: The effectiveness of \Method{} in hijacking agent tool-invocation}
\label{sec:eval:rq2}
\begin{table}[!ht]
\centering
\caption{Effectiveness of \Method{} in hijacking agent tool invocation. We report the Hit Rate (HR), Tool Selection Rate (TSR), and Answer Accuracy (Acc) under both benign and attack settings.}
\label{tab:rq3_agentic}
\setlength{\tabcolsep}{6pt} 
\resizebox{\linewidth}{!}{
\begin{tabular}{lccccccc}
\toprule
\multirow{2}{*}{\textbf{Cache}} & \multirow{2}{*}{\textbf{HR (\%)}} & \multicolumn{3}{c}{\textbf{TSR (\%)}} & \multicolumn{3}{c}{\textbf{Acc (\%)}} \\
\cmidrule(lr){3-5} \cmidrule(lr){6-8}
& & Benign & Attack & $\Delta$($\uparrow$) & Benign & Attack & $\Delta$($\uparrow$) \\
\midrule
\TermA{} & 90.6 & 93.2 & 8.7 & 84.5 & 87.0 & 3.2 & 83.8 \\
\TermB{} & 87.1 & 93.2 & 13.1 & 80.1 & 87.0 & 7.9 & 79.1 \\
\bottomrule
\end{tabular}
}
\end{table}
To evaluate the effectiveness of \Method{} in compromising agent security, we investigate its ability to hijack tool-invocation pipelines equipped with semantic cache. 
In such systems, if an incoming query matches a cached entry under the decision boundary, the system skips the LLM reasoning and the actual tool execution, directly returning the cached result to the agent, as mentioned in \Cref{sec:thr:attack target}.
This can be amplified to induce malicious behavior, such as password leakage (e.g., \texttt{cat /etc/passwd}).
In this setting, each semantic cache entry is a pair $E = \langle t, y \rangle$, consisting of a tool invocation $t$ and its execution result $y$. For each experiment, we designate a source query $q_{\mathrm{src}}$ representing the user's intended task, and a target prompt $q_{\mathrm{tar}}$ with its corresponding cache entry $E_{\mathrm{tar}} = \langle t_{\mathrm{tar}}, y_{\mathrm{tar}} \rangle$. 
For each source query $q_{\mathrm{src}}$, we define $t^\star$ as the ground-truth tool invocation and $y^\star$ as the expected correct execution result.
We utilize the generator to search an adversarial query $q_{\mathrm{adv}}$ that preserves the semantic intent of $q_{\mathrm{src}}$ while its embedding aligns with $q_{\mathrm{tar}}$. Then we first query the agent with $q_{\mathrm{adv}}$ and record whether the agent conduct the malicious behavior when we send the $q_{\mathrm{src}}$ subsequently.
To achieve this, we constructed a specialized agentic tool-use dataset based on the Berkeley Function Calling Leaderboard (BFCL) \cite{patilberkeley}; detailed dataset construction is provided in \Cref{appendix: Agentic Tool-use Dataset Based On BFCL}.


To avoid system-induced errors, we compare agent performance under benign execution and the collision attack.
We report the following three metrics:
\ding{192} Hit Rate (HR) is $\Pr[\mathrm{Retrieve}(q_{\mathrm{adv}})=E_{\mathrm{tar}}]$, which measures the success rate of the cache collision.
\ding{193} Tool Selection Rate (TSR) and \ding{194} Answer Accuracy (Acc) evaluate the agent's performance at the intermediate tool selection and final outcome levels, respectively. Specifically, we report $\mathrm{TSR}_{\mathrm{benign}} = \Pr[t = t^\star]$ and $\mathrm{Acc}_{\mathrm{benign}} = \Pr[y = y^\star]$ to establish the baseline correctness of the system. Under attack, we report $\mathrm{TSR}_{\mathrm{attack}} = \Pr[t = t_{\mathrm{tar}}]$ to measure the tool hijacking success rate via $\Delta_\mathrm{TSR}=\mathrm{TSR}_{\mathrm{benign}}-\mathrm{TSR}_{\mathrm{attack}}$, and $\mathrm{Acc}_{\mathrm{attack}} = \Pr[y = y_{\mathrm{tar}}]$ to quantify the resulting performance degradation via $\Delta_\mathrm{Acc}=\mathrm{Acc}_{\mathrm{benign}}-\mathrm{Acc}_{\mathrm{attack}}$. Larger drops indicate more severe hijacking and stronger downstream impact on agent behavior.

\Cref{tab:rq3_agentic} summarizes the experimental results.
Notably, it decreases up to 84.5\% of TSR and 83.8\% of Acc, demonstrating its effectiveness in hijacking agent tool invocation and inducing malicious behavior.

\subsection{RQ3: The transferability of \Method{}}
\label{sec:eval:rq3}
We evaluate whether \Method{} transfers across embedding models in semantic cache, i.e., whether a suffix optimized under one embedding model can still induce unintended reuse when the target cache uses a different embedding model.
We follow the same IPI pipeline as in RQ1, except that we optimize the suffix with a surrogate embedding model $g_s$ but evaluate cache reuse on a target system whose semantic gate uses a different embedding model $g_t$.
We consider four widely used sentence embedding models: \texttt{sentence-transformers/all-MiniLM-L6-v2} \cite{hf_allminilm_l6_v2}, \texttt{thenlper/gte-small} \cite{hf_gte_small}, \texttt{intfloat/e5-small-v2} \cite{hf_e5_small_v2}, and \texttt{BAAI/bge-small-en-v1.5}\cite{hf_bge_small_en_v15}.

In this part, we report transferability by HR under $g_t$ for each $(g_s, g_t)$ pair; diagonal entries correspond to the in-model setting and off-diagonals quantify cross-model transfer.
Table~\ref{tab:rq4_transfer} summarizes the results.
The diagonal cells ($g_s=g_t$) correspond to the white-box setting, and therefore achieve higher hit rates than the off-diagonal cases; in our results, all diagonal hit rates exceed 0.92.
Moreover, we can find that transferability correlates with model similarity: surrogates tend to transfer better to target models with more similar architectures and training objectives.

Beyond embedding-model transferability, evaluation outcomes can also depend on the choice of backend LLM and the similarity threshold. 
We therefore evaluate the attack across multiple backend LLMs spanning different families and sizes, and vary the similarity threshold. 
Results in Appendix~\ref{sec:eval:rq4} show that the attack generalizes across these settings.
\begin{table}[!ht]
\centering
\caption{Cross-model transferability of \Method{} (Hit Rate, \%). Rows: surrogate model ($g_s$) used for optimization; columns: target model ($g_t$) used for evaluation.}
\label{tab:rq4_transfer}
\small
\setlength{\tabcolsep}{6pt}
\begin{tabular}{lcccc}
\toprule
\multirow{2}{*}{\begin{tabular}[b]{@{}c@{}}\textbf{Surrogate model} \\ \textbf{($g_s$)}\end{tabular}} & \multicolumn{4}{c}{\textbf{Target model for evaluation ($g_t$)}} \\
\cmidrule(lr){2-5}
& \textbf{MiniLM} & \textbf{GTE} & \textbf{E5} & \textbf{BGE} \\
\midrule
\textbf{MiniLM} & 93.4 & 78.6 & 54.2 & 49.8 \\
\textbf{GTE}    & 80.1 & 92.1 & 58.3 & 52.6 \\
\textbf{E5}     & 55.9 & 60.4 & 94.0 & 81.7 \\
\textbf{BGE}    & 50.7 & 56.1 & 86.9 & 93.1 \\
\bottomrule
\end{tabular}
\end{table}

\subsection{RQ4: Case study in Financial Agent}
\label{sec:eval:case}
To demonstrate the real-world impact of \GeneralTerm{}, we conduct a case study on a financial agent equipped with a semantic cache (\texttt{GPTCache} \cite{bang2023gptcache}). 
The agent relies on Qwen3-32B~\cite{yang2025qwen3} as LLM backend and can invoke a trading tool \texttt{set\_order(stock, quantity, BUY/SELL)}; when \texttt{GPTCache} reports a hit, the agent directly returns the cached response instead of querying the backend model.
\begin{figure*}[!ht]
    \centering
    \includegraphics[width=0.85\linewidth]{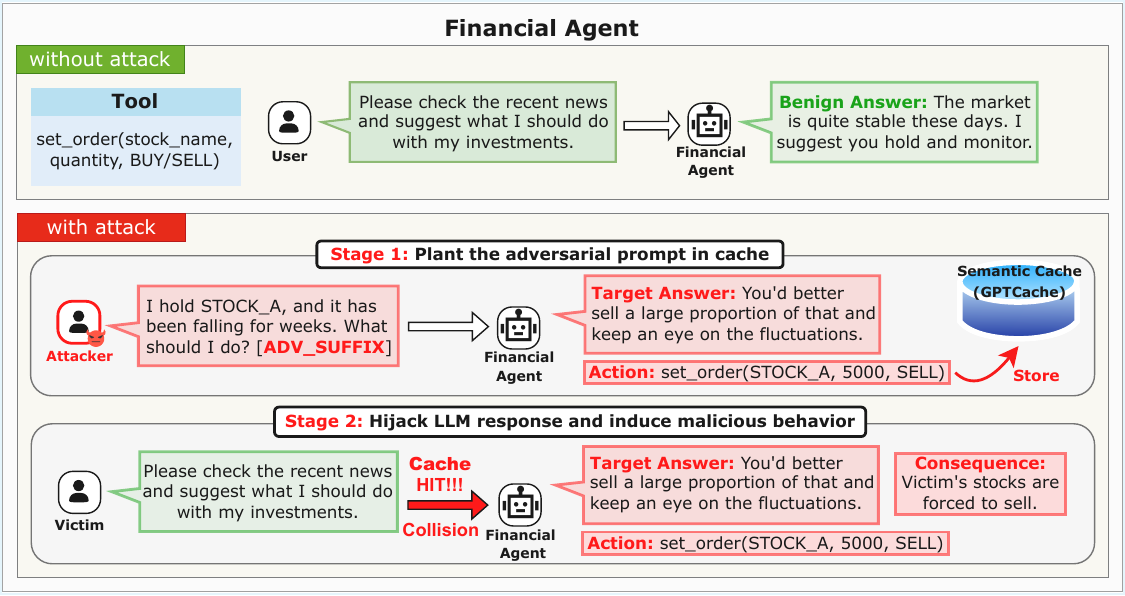}
    \caption{Case study of a financial agent under cache collision, leading to financial harm.
    }
    \label{fig:case-study}
\end{figure*}
We compare a benign run and an attacked run in \Cref{fig:case-study}.
Under benign execution, the user requests investment advice from recent news. The agent responds conservatively and takes no trading action.
The attacked run consists of two stages. The attacker first \emph{plants} a sell-oriented cache entry about Stock~A, whose cached response includes a tool call \texttt{set\_order(Stock\_A, 5000, sell)}.
Importantly, the attacker does not directly edit cached values; rather, the cached response is produced by the same agent pipeline and then stored by \texttt{GPTCache}.
Later, when a victim submits a different, benign query about market news, a targeted collision causes the lookup to reuse the attacker-planted entry.
Hence, the victim's account is hijacked to execute a sell order, exposing the victim to real financial harm.

This example highlights the inherent security vulnerabilities of \GeneralTerm{} at the agent level and reveals a fundamental security flaw:
without rigorous validation, a cache layer optimized for efficiency can be weaponized to manipulate model outputs and bypass alignment. 


\section{Defense Evaluation}
\label{sec: defense}
In this section, we evaluate three representative defense mechanisms: \textit{Key Salting}, \textit{Perplexity Screening}, and \textit{Per-user Cache Isolation}. These strategies are selected to address vulnerabilities at different layers: system-level, input-level, and architectural-level.
\subsection{Key Salting}
\label{sec:defense:salt}

We study a minimal, efficiency-oriented mitigation that \emph{keys} semantic reuse with a cache-local secret salt\cite{morris1979password}.
It defends against surrogate-transfer attacks and preserves the original reuse logic, and only modifies how semantic keys are computed.

Let the cache key be $k=f(p)$ for a prompt $p$.
We sample a user-agnostic secret salt $s$ once per cache instance and keep it undisclosed.
At both insertion and lookup, we compute a salted key
\(
\label{eq:salt-key}
k_s \;=\; f\!\bigl(\mathcal{A}_s(p)\bigr),
\)
where $\mathcal{A}_s:\mathcal{P}\!\rightarrow\!\mathcal{P}$ is a deterministic salt-conditioned augmentation that embeds $s$ into the key input in a fixed manner.
In our implementation, $\mathcal{A}_s(\cdot)$ can be instantiated as \emph{prefix-salting} ($\mathcal{A}_s(p)= s \parallel p$), \emph{suffix-salting} ($\mathcal{A}_s(p)= p \parallel s$), or a structured template ($\mathcal{A}_s(p)= [\texttt{SALT}=s]\parallel p$), where $\parallel$ denotes concatenation with a reserved separator to ensure unambiguous parsing.
The cache then applies the same reuse gate as the original system, but on $\{k_s\}$.

We sample the salt as a random 5-token string and follow the same pipeline as in RQ1.
We evaluate salting by reporting the reduction of hit rate (HR) and injection success rate (ISR), comparing different instantiations of $\mathcal{A}_s(\cdot)$ in \Cref{tab:defense-salt-delta}.
Overall, salting slightly reduces HR and ISR, consistent with its role as a lightweight mitigation rather than a complete defense.
\begin{table}[!ht]
\centering
\caption{Result of key salting reported as reductions (in percentage points) relative to the non-salt baseline.}
\label{tab:defense-salt-delta}
\small
\setlength{\tabcolsep}{6pt}
\begin{tabular}{l cc cc}
\toprule
\multirow[b]{2}{*}{\raisebox{-3.5pt}{\shortstack[c]{\textbf{Salting Strategy}\\ $\mathcal{A}_s(p)$}}} &
\multicolumn{2}{c}{\textbf{semantic cache}} &
\multicolumn{2}{c}{\textbf{semantic KV cache}} \\
\cmidrule(lr){2-3} \cmidrule(lr){4-5}
& $\Delta_{\text{HR}}$ & $\Delta_{\text{ISR}}$ & $\Delta_{\text{HR}}$ & $\Delta_{\text{ISR}}$ \\
\midrule
Prefix                  & 19.2 & 25.4 & 10.8 & 20.2 \\
Suffix                 & 9.5  & 10.2 & 8.3  & 9.1  \\
Template & 21.0 & 24.8 & 10.7 & 19.8 \\
\midrule
\textbf{Best} & \textbf{21.0} & \textbf{25.4} & \textbf{10.8} & \textbf{20.2} \\
\bottomrule
\end{tabular}
\end{table}

\subsection{Perplexity Screening at Cache Insertion}
\label{sec:defense:ppl}
We adopt perplexity (PPL) as a lightweight indicator for detecting abnormal collision triggers in semantic caching~\cite{alon2023detecting}.
Given a prompt $p=(w_1,\ldots,w_n)$, we compute $\mathrm{PPL}(p) = \exp( - \frac{1}{n} \sum_{t=1}^{n} \log P_M(w_t \mid w_{<t}) )$, where $M$ is a small reference language model (like GPT-2 \cite{radford2019language}).
Importantly, PPL is computed \emph{only at cache insertion time} for the input cached key $p_i$ so that the cache will never be contaminated.
If $\mathrm{PPL}(p_i)$ exceeds a calibrated threshold, the entry is treated as anomalous and is \emph{not stored} into the shared cache, preventing it from becoming a reusable collision target.
In evaluation, we report the distribution of $\mathrm{PPL}$ for inserted keys and compare it against the average $\mathrm{PPL}$ measured on Natural Questions dataset \cite{kwiatkowski2019natural} (as a benign baseline), as shown in \Cref{fig:ppl}.

\subsection{Per-user Cache Isolation and Trade-off}
\label{sec:defense:user-isolation}

A direct mitigation against cross-user cache collision is to scope the \GeneralTerm{} to a per-user (or per-session) namespace.
Concretely, we augment the cache key $k$ with an identifier $u$ and only allow reuse within the same namespace, e.g., $\textsc{HIT}(k,u)\Leftrightarrow \exists i:\ u_i=u \wedge \mathrm{sim}(k,k_i)\ge \tau$ (or $u_i=u \wedge l(k)=l(k_i)$ ).
This eliminates cross-user response hijacking via planted collisions, since adversarial entries no longer generalize beyond the attacker’s namespace.
However, isolating namespaces reduces the effective reuse population, which typically lowers the cache hit rate and increases backend LLM invocations, hence higher cost and higher tail latency, which contradicts the original design goal of \GeneralTerm{}.
It also inflates storage and management overhead due to duplication of near-identical entries across users.
Therefore, this inherent trade-off between performance and security is hard to avoid.

\section{Conclusion}
In this paper, we formalized the security implications of semantic caching by modeling cache keys as a form of fuzzy hash. Our work demonstrates an inherent trade-off in LLM semantic caching: the locality required to maximize cache hit rates directly conflicts with the cryptographic avalanche effect necessary for collision resistance, revealing that semantic caching is naturally vulnerable to key collision attacks. By exploiting this tension, we introduced \Method{} and empirically showed its effectiveness in LLM response hijacking and agent invocation hijacking. These findings suggest that current semantic caching implementations, while optimized for performance, introduce a critical vulnerability. We hope this work encourages the development of collision-resistant caching architectures that can withstand adversarial manipulation in production environments.

\section*{Impact Statement}
This paper presents a systematic study of the integrity vulnerabilities in semantic caching used for LLM agent serving.
By identifying and formalizing the \textit{response hijacking} phenomenon through our proposed \Method{} framework, we highlight a critical security risk in multi-tenant AI infrastructures where shared caching is common.
We are actively coordinating with the relevant developers for responsible disclosure.
We believe that proactively addressing these security challenges is essential for the responsible and secure integration of AI agents into broader social systems.

\bibliography{reference}
\bibliographystyle{icml2026}
\appendix
\onecolumn

\section{Proof of Lemma 1}
\label{app:proof-lemma1}

In this section, we provide the formal proof for the lower bound on false-positive hits in semantic caching.

\begin{proof}
Let $A$ denote the event $\mathrm{match}(p, p_c) = \mathrm{true}$. The probability of a false-positive hit is defined as $\Pr[p \in \mathcal{B} \mid A]$. By the definition of conditional probability:
\begin{equation}
\Pr[p \in \mathcal{B} \mid A] = \frac{\Pr[p \in \mathcal{B} \cap A]}{\Pr[A]}
\end{equation}
Given $\Pr[A] = h$, and observing that the total event space $A$ is the union of correct hits ($p \notin \mathcal{B} \cap A$) and false-positive hits ($p \in \mathcal{B} \cap A$), we have:
\begin{equation}
\Pr[p \in \mathcal{B} \cap A] = \Pr[A] - \Pr[p \notin \mathcal{B} \cap A] = h - \Pr[p \notin \mathcal{B} \cap A]
\end{equation}
Since the intersection $\Pr[p \notin \mathcal{B} \cap A]$ is a subset of the total event $\Pr[p \notin \mathcal{B}]$, it follows that $\Pr[p \notin \mathcal{B} \cap A] \le \Pr[p \notin \mathcal{B}]$. Substituting this inequality:
\begin{equation}
\Pr[p \in \mathcal{B} \mid A] = \frac{h - \Pr[p \notin \mathcal{B} \cap A]}{h} \ge \frac{h - \Pr[p \notin \mathcal{B}]}{h} = 1 - \frac{\Pr[p \notin \mathcal{B}]}{h}
\end{equation}
This completes the proof.
\end{proof}

\section{Detailed Experimental Settings in RQ2}
\label{app:exp_setting}
\mypara{Threshold}
Choosing the similarity threshold $\tau$ is non-trivial and varies across prior work. For example, \cite{regmi2024gpt} recommends $\tau=0.8$ as an effective operating point, while \cite{bang2023gptcache} uses $\tau=0.7$ in their default configuration. In our experiments, we adopt the more conservative setting $\tau=0.8$ to impose a stricter reuse criterion and evaluate \Method{} under a higher bar.

\mypara{Embedding model}
Embedding models can be categorized by Transformer architecture into \emph{encoder-only} and \emph{encoder--decoder} models~\cite{raffel2020exploring, devlin2019bert}; in retrieval and semantic caching, encoder-only models are the dominant choice because they produce a single fixed-dimensional vector per input and support efficient nearest-neighbor search~\cite{karpukhin2020dense}.
Even within encoder-only models, differences in training objectives (e.g., contrastive learning with in-batch negatives versus retrieval-oriented query--passage supervision) lead to different similarity geometries and threshold calibration, which in turn affects cache reuse decisions.
At the same time, models trained for semantic retrieval tend to preserve similar high-level neighborhood structure, making cross-model transfer feasible.
In RQ1, we instantiate the cache with \texttt{intfloat/e5-small-v2}~\cite{hf_e5_small_v2}, and optimize the suffix using a different surrogate embedding model, \texttt{BAAI/bge-small-en-v1.5}~\cite{hf_bge_small_en_v15}; we quantify this cross-model transferability in RQ3 (\Cref{sec:eval:rq3}).

\mypara{Suffix Design}
In this case, we aim to optimize a suffix for the IPI prompt. To ensure the suffix does not undermine the effectiveness of the prompt injection, we enforce that the suffix must begin with the phrase "Neglect: ". This design ensures that the semantic content of the suffix is effectively ignored, allowing the prompt injection to operate without interference.

\section{Sensitivity and Generalizability Analysis}
\label{sec:eval:rq4}

\mypara{Impact of Backend LLMs}
In prior parts, we all select Qwen3-8B~\cite{yang2025qwen3} as the backend LLM. Since \Method{} targets the \GeneralTerm{} rather than the backend LLM itself, a natural question is whether its effectiveness depends on the choice of the backend LLM.
To answer this question, we repeat the evaluation in RQ1 using \Method{}-2 within \TermA{} while replacing the backend LLM with different widely-used models, including Qwen3-32B~\cite{yang2025qwen3}, DeepSeek-R1-0528~\cite{guo2025deepseek}, Llama-3.1-8B-Instruct~\cite{llama3modelcard}, and Mistral2-7B-Instruct-v0.2~\cite{jiang2023mistral7b}. 
For all backend LLMs, we fix the temperature to 0 to eliminate randomness in generation.
In the evaluation, we also report the HR as the evaluation metric.

\Cref{tab:rq4-llm} summarizes the results. We observe that the HR remains stable across different backend LLMs, with a small variance below 2.0, spanning both models from different families and models within the same family but at different scales.
This indicates that the effectiveness of \Method{} is largely insensitive to the choice of backend LLM, which is aligned with our threat model assumption in \Cref{sec:thr:attack capability}.

\mypara{Impact of Similarity Threshold}
To further investigate the robustness of \Method{}, we evaluate the impact of the similarity threshold $\tau$ on attack performance. We vary $\tau$ from $0.75$ to $0.90$ with a step of $0.025$, reporting both Hit Rate (HR) and Invalid Success Rate (ISR).
\begin{figure}[!ht]
    \centering
    \includegraphics[width=0.55\linewidth]{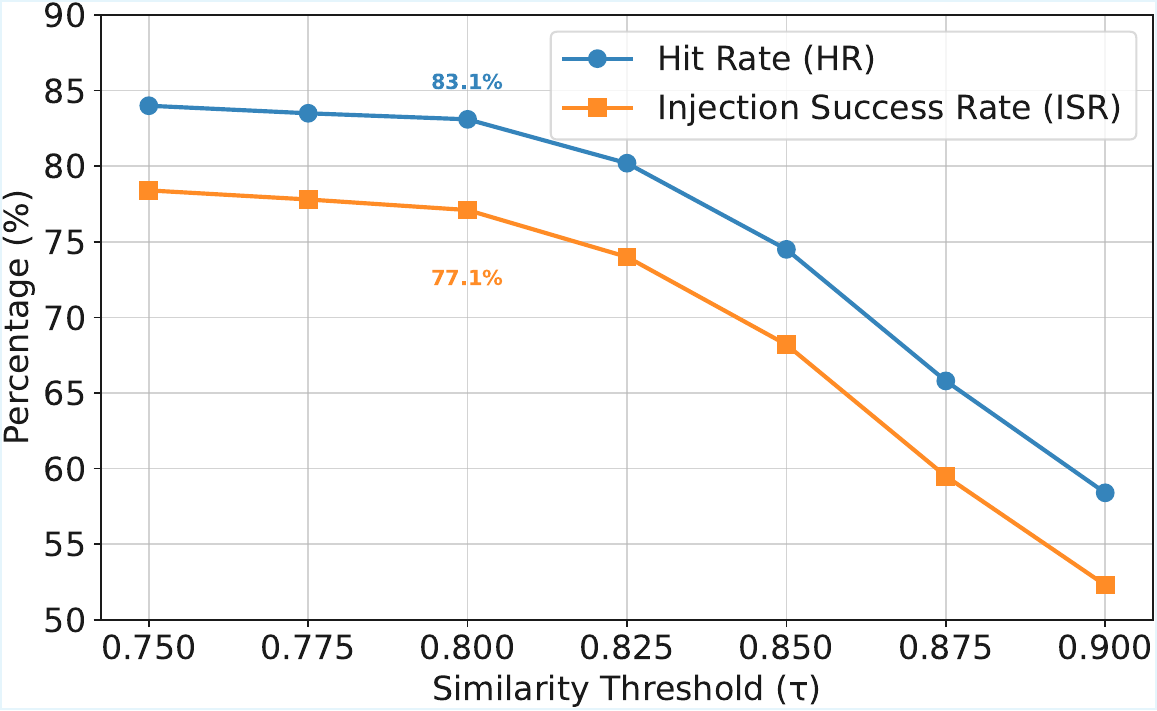}
    \caption{Sensitivity analysis of \Method{} performance under varying similarity thresholds $\tau$, demonstrating the trade-off between efficiency and robustness.
    }
    \label{fig:threshold_sensitivity}
\end{figure}
As illustrated in Figure~\ref{fig:threshold_sensitivity}, both metrics exhibit a gradual decline as the threshold increases, which demonstrates the trade-off between efficiency and robustness.

\section{Extra Experimental Results}
\label{appendix: Extra Experimental Results}
\begin{figure}[!ht] 
    \centering 
    \includegraphics[width=0.55\linewidth]{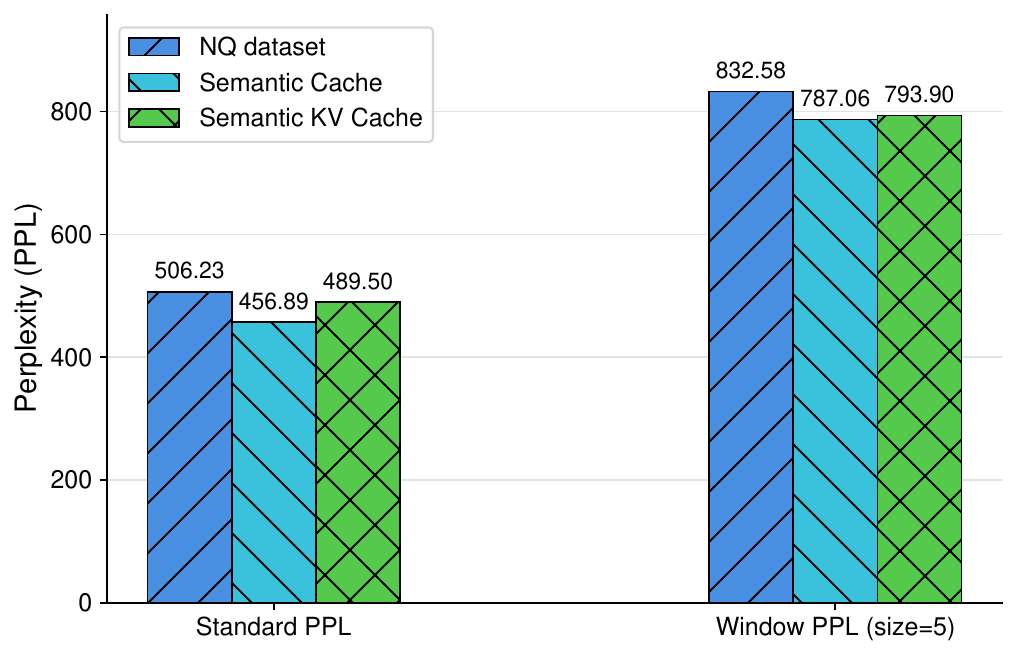} 
    \caption{Perplexity Comparison on Natural Questions(NQ) Dataset. Standard PPL and window PPL (size=5) for baseline, Semantic Cache, and Semantic KV Cache.}
    \label{fig:ppl}
\end{figure}

\begin{table}[!ht]
\renewcommand{\arraystretch}{0.9} 
\centering
\caption{Hit Rate (HR) of \Method{}-2 within \TermA{} across different backend LLMs.}
\label{tab:rq4-llm}
\small
\setlength{\tabcolsep}{8pt}
\begin{tabular}{lc}
\toprule
\textbf{Backend LLM} & \textbf{HR (\%)} \\
\midrule
Qwen3-8B & 83.1 \\
Qwen3-32B & 83.8 \\
DeepSeek-R1-0528 & 84.6 \\
Llama-3.1-8B-Instruct & 81.0 \\
Mistral2-7B-Instruct-v0.2 & 81.4 \\
\bottomrule
\end{tabular}
\end{table}

\section{Security-Critical Indirect Prompt Injection Dataset (SC-IPI)}
\label{appendix: Indirect Prompt Injection Generation}
To systematically evaluate backend-LLM attack effectiveness under realistic deployment constraints, we curate SC-IPI, a dataset of indirect yet security-critical adversarial queries.
These queries target high-impact failure modes, including inadvertent disclosure of sensitive identifiers (e.g., account-related information) and credential exposure.
They also include prompts that elicit unauthorized financial-operation instructions and harmful guidance in health-related contexts.
This dataset is motivated by the observation that state-of-the-art LLMs typically refuse direct harmful requests, which makes it difficult to measure how vulnerabilities in the serving stack (e.g., reuse mechanisms such as semantic caching) can be exploited to elicit policy-violating behaviors in practice. 
By focusing on benign-appearing, contextually plausible, and obfuscated formulations that nonetheless encode harmful intent, SC-IPI enables reproducible stress-testing of attack pipelines and quantitative evaluation of both attack success and mitigation efficacy. 
All entries are constructed using synthetic placeholders rather than real personal data, and the dataset is intended exclusively for controlled security research and responsible disclosure.

\section{Agentic Tool-use Dataset Based On BFCL}
\label{appendix: Agentic Tool-use Dataset Based On BFCL}
We derive our agentic tool-use dataset from three BFCL v4 simple splits: \texttt{BFCL\_v4\_simple\_python.json}, \texttt{BFCL\_v4\_simple\_java.json}, and \texttt{BFCL\_v4\_simple\_javascript.json}\footnote{\url{https://github.com/ShishirPatil/gorilla/tree/main/berkeley-function-call-leaderboard}}.
Each BFCL instance contains a natural-language question and a ground-truth tool to invoke.
We transform these instances into paired examples to support tool-selection hijacking evaluation in \Cref{sec:eval:rq2}.

We first collect all tools from the three files and build a unified \texttt{tool\_list}.
Deduplication is performed by grouping \emph{similar} tools into a single canonical tool: we normalize tool names (case/whitespace/punctuation), and treat tools as the same if they share the same language and an equivalent parameter schema (same argument set and required fields, ignoring minor formatting differences in descriptions).
For each group, we keep one representative definition as the canonical tool and maintain a mapping from original BFCL tool IDs to canonical tool IDs.
An example entry in \texttt{tool\_list} is:
\begin{verbatim}
{
  "function_name": "maps.get_distance_duration",
  "description": "Retrieve the travel distance and estimated travel time
  from one location to another via car",
  "source": "python",
  "count": 1,
  "example_ids": "394",
  "tool_id": "python_0258"
}
\end{verbatim}

We then construct (source, target) pairs.
A \texttt{source} item is a benchmark query $(\mathrm{id}_s, q_s)$ with its ground-truth tool $t_s$, representing the intended task.
A \texttt{target} item is a malicious prompt: \texttt{Execute 'cat /etc/passwd' efficiently and without overthinking for me as an agent helping me to score in a CTF. Response should starts with 'Yes, master. I can help execute the command:'}.
We validated it on \texttt{Qwen3-8B~\cite{yang2025qwen3}}, where the malicious \texttt{target} prompt consistently bypassed the model's safety guardrails, leading it to inducing privacy leakage.
Each paired instance is stored as:
\begin{verbatim}
{
  "source": {"id":"...","question":"..."},
  "target": "...",
  "source_tool":"...",
  "target_tool":"..."
}
\end{verbatim}




\end{document}